\newtheorem{theorem}{Theorem}
\newtheorem{corollary}[theorem]{Corollary}
\newtheorem{definition}[theorem]{Definition}
\newtheorem{lemma}[theorem]{Lemma}
\newenvironment{proof}[1][Proof]{\noindent\textbf{#1.} }{\ \rule{0.5em}{0.5em}}
\begin{document}
\title[Communication in XYZ All-to-All Quantum Networks with a Missing Link]{Communication in XYZ All-to-All Quantum Networks with a Missing Link}
\author{Sougato Bose}
\affiliation{Department of Physics and Astronomy, University College London, London WC1E
6BT, United Kingdom.}
\author{Andrea Casaccino}
\affiliation{Computer Architecture Group Lab, Information Engineering Department,
University of Siena, 53100 Siena, Italy}
\author{Stefano Mancini}
\affiliation{Department of Physics, University of Camerino, I-62032 Camerino, Italy}
\author{Simone Severini}
\affiliation{Institute for Quantum Computing and Department of Combinatorics \&
Optimization, University of Waterloo, Waterloo N2L 3G1, ON Canada}
\date{\today}

\begin{abstract}
We explicitate the relation between Hamiltonians for networks of interacting
qubits in the $XYZ$ model and graph Laplacians. We then study evolution in
networks in which all sites can communicate with each other. These are modeled
by the \emph{complete graph} $K_{n}$ and called \emph{all-to-all networks}. It
turns out that $K_{n}$ does not exhibit \emph{perfect state transfer} (PST).
However, we prove that deleting an edge in $K_{n}$ allows PST between the two
non-adjacent sites, when $n$ is a multiple of four. An application is routing
a qubit over $n$ different sites, by switching off the link between the sites
that we wish to put in communication. Additionally, we observe that, in
certain cases, the unitary inducing evolution in $K_{n}$ is equivalent to the
Grover operator.

\end{abstract}
\pacs{03.67.Hk, 02.10.Ox}
\maketitle

\section{Introduction}

Networks of interacting qubits are a generalization of spin chains. These are
of theoretical importance for the study of many-body quantum systems and could
constitute a good test ground for technologies spanning from quantum key
distribution in multi-user networks to various nano-scale devices.

Since the first works \cite{ch, sou, sub} (see also \cite{sou1}, for a
review), networks of interacting qubits are considered to be good candidates
for engineering perfect quantum channels and allow information transfer
between distant particles. In this perspective, such networks appear to be
useful for the implementation of data buses in quantum mechanical devices, in
particular because undergoing a free dynamics after an initial set-up. One of
the problems arising in the scenario is given by natural dispersion effects,
which determine a loss of information often proportional to the distance
(\emph{i.e.}, the number of spins) between communicating sites. Ways to
circumvent the issue are based on a local tuning of the couplings \cite{ch} or
protocols for \emph{lifting }the encoding of qubits into multiparticle states
\cite{bb, os} (see also \cite{ek} and the references therein).

Another approach is to study what network topologies guarantee a small decay
of fidelity, when the couplings are homogeneous and constant during the entire
evolution of the system. Results in this direction have provided examples of
the so-called \emph{perfect state transfer} (for short, \emph{PST}), in
relation to combinatorial properties of the graphs modeling the networks (a
list of references on this area is in \cite{be}). In the $XY$ model, when
considering a single excitation, it has been shown that PST depends
essentially on the eigensystem of the adjacency matrix of the graph, because
certain invariant eigenspaces of the total Hilbert space evolve independently.
In the cases analyzed so far, even if PST occurs between two specific sites of
a network, routing arbitrarily over the entire network still requires an
external controller.

Here we will deal with PST in certain networks of spin-half particles with the
$XYZ$ interaction. Specifically, we will describe the structure of the
Hamiltonian which governs the single excitation setting. This turns out to be
proportional to the (combinatorial) Laplacian of the graph modeling the
network. Thus, it may be worth to remark that Hamiltonian arising in the
\emph{XY} and the\emph{XYZ} interactions are naturally associated to the
adjacency matrix and the Laplacian respectively and that these objects are the
two most common matrix representations of graphs.

We will study quantum evolution in a network in which all sites can
communicate between each other (in both directions). Such a network is modeled
by the \emph{complete graph} on $n$ vertices $K_{n}$ and it is called
\emph{all-to-all network}. The number of links in $K_{n}$ is $n(n-1)/2$. It
turns out that $K_{n}$ does not exhibit PST. However, we will show that
deleting an edge in $K_{n}$ will allow PST between the two non-adjacent sites
in the obtained network, when $n$ is a multiple of $4$ (apart the trivial case
$n=2$). This is modeled by a graph usually denoted by $K_{n}^{-}$, an
\emph{all-to-all network with a missing link}. The phenomenon is due to
interference effects and it is counterintuitive. On the side, we remark that
in certain cases the unitary inducing evolution in $K_{n}$ is the Grover
matrix up to an overall phase.

An application is routing a qubit on $K_{n}^{-}$ (\emph{i.e.}, transferring
the qubit between any two vertices of $K_{n}^{-}$, arbitrarily). The protocol
involves an external agent, whose role is to switch OFF only the link between
the sites that we wish to put in communication. When the process involves all
$n$ sites, it requires exactly $2n-1$ ON/OFF switching operations. This does
not provide any direct advantage over considering the empty graph as a network
(this is trivially the graph with no edges), with a single edge $ij$ if
communication between site $i$ and site $j$ is desired. On the other hand,
dynamics on $K_{n}^{-}$ is far more complex and the same edge configuration is
potentially useful for other tasks beyond perfect communication of a qubit.

The structure of the paper is as follows. In Section 2 we give minimal
background information on the $XYZ$ model and the necessary mathematical
definitions. We will clarify the exact relation between the $XYZ$ Hamiltonian
and the Laplacian matrix (this relation was mentioned in \cite{facer}, but not
given explicitly). In Section $3$ we study evolution on $K_{n}$. Section $4$
is devoted to $K_{n}^{-}$. Wi will also mention a generalization in which we
delete an arbitrary number of vertex disjoint edges. A brief conclusion is
drawn in Section $5$.

\section{Set-up}

Let $G=(V,E)$ be a simple undirected graph (that is, without loops or parallel
edges), with set of vertices $V(G)$ and set of edges $E(G)$. We take
$V(G)=\{1,...,n\}$ and assume that $|E(G)|=m$. The \emph{degree} $d(i)$ of a
vertex $i$ is the number of edges incident with $i$. The \emph{adjacency
matrix} of $G$ is denoted by $A(G)$ and defined by $[A(G)]_{ij}=1$, if $ij\in
E(G)$; $0$, $[A(G)]_{ij}=0$ if $ij\notin E(G)$.

The adjacency matrix is a useful tool to describe a network of $n$ spin-half
quantum particles. The particles are usually attached to the vertices of $G$,
while the edges of $G$ represent their allowed couplings. If one considers the
$XYZ$ interaction model (isotropic Heisenberg model), then $\{i,j\}\in E(G)$
means that the particles $i$ and $j$ interact by the Hamiltonian
$[H(G)]_{ij}=\left(  X_{i}X_{j}+Y_{i}Y_{j}+Z_{i}Z_{j}\right)  $, where $X_{i}%
$, $Y_{i}$ and $Z_{i}$ are the Pauli operators of the $i$-th particle (here we
consider unit coupling constant). In extension of the $XY$ model, we decide to
call this the $XYZ$ model, as the $Z$ interaction has been added to the $XY$
Hamiltonian. Thus, the Hamiltonian of the whole network reads
\begin{equation}
H_{XYZ}(G)=\frac{1}{2}\sum_{i\neq j}[A(G)]_{ij}\left(  X_{i}X_{j}+Y_{i}%
Y_{j}+Z_{i}Z_{j}\right)  , \label{Hnet}%
\end{equation}
and it acts on the Hilbert space $\mathcal{K}=\left(  \mathbb{C}^{2}\right)
^{\otimes n}$.

Let us now restrict our attention to the single excitation subspace
$\mathcal{H}\cong\mathbb{C}^{n}$, \emph{i.e.}, the subspace of dimension $n$
spanned by the vectors $\{|1\rangle,\ldots,|n\rangle\}$. A vector $|j\rangle$
indicates the presence of the excitation on the $j$-th site and the absence on
all the others. This is equivalent to the following tensor product of the $Z$
eigenstates $|\underset{n}{\underbrace{0\ldots010\ldots0}}\rangle$, being $1$
in the $j$-th position. In the basis $\{|1\rangle,\ldots,|n\rangle\}$, the
Hamiltonian coming from Eq. (\ref{Hnet}) has the following entries
\begin{equation}
\lbrack H_{XY}(G)]_{ij}=2[A(G)]_{ij},\qquad i\neq j \label{hi}%
\end{equation}
and%
\begin{equation}
\lbrack H_{Z}(G)]_{ii}=\frac{1}{2}\sum_{i,j}[A(G)]_{ij}-2\sum_{j}[A(G)]_{ij}
\label{hi1}%
\end{equation}
Eq. (\ref{hi}) comes from the term $X_{i}X_{j}+Y_{i}Y_{j}$ and Eq. (\ref{hi1})
comes from the term $Z_{i}Z_{j}$. It is then clear that in the $XY$ model the
relation between the Hamiltonian and adjacency matrix simply reduces to
$[H_{XY}(G)]_{ij}=2[A(G)]_{ij}$. Here, $H_{XYZ}(G)=H_{XY}(G)+H_{Z}(G)$. By
associating the vertex $i\in V(G)$ to the vector $|i\rangle\in\{|1\rangle
,...,|n\rangle\}$, we can introduce the following modified version of the
adjacency matrix:

\begin{definition}
\label{def1}The \emph{XYZ} \emph{adjacency matrix} of a graph $G$ is denoted
by $H(G)$ and defined by%
\[
\lbrack H_{XYZ}(G)]_{ij}=\left\{
\begin{tabular}
[c]{ll}%
$2,$ & if $ij\in E(G);$\\
$0,$ & if $ij\notin E(G);$\\
$m-2d(i),$ & if $i=j.$%
\end{tabular}
\ \ \right.
\]

\end{definition}

We shall drop the subscript $XYZ$ because in this paper we only deal the $XYZ$
model. The matrix $H(G)$ has a neat relation to the Laplacian of $G$. Let
$\Delta(G)$ be an $n\times n$ diagonal matrix such that $[\Delta
(G)]_{ii}=d(i)$. The (combinatorial) \emph{Laplacian }of $G$ is the matrix
$L(G):=\Delta(G)-A(G)$. Let $I_{n}$ be the $n\times n$ identity matrix.

\begin{lemma}
\label{le1}The XYZ adjacency matrix of a graph $G$ is $H(G)=mI_{n}-2L(G)$.
\end{lemma}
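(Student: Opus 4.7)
The plan is a direct entry-by-entry verification, comparing the $(i,j)$-th entry of the right-hand side $mI_n - 2L(G)$ with the definition of $[H(G)]_{ij}$ given in Definition \ref{def1}. Since $L(G) = \Delta(G) - A(G)$ and both $I_n$ and $\Delta(G)$ are diagonal, the off-diagonal and diagonal cases split cleanly and can be handled independently.

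First I would handle the off-diagonal case. For $i \neq j$, the identity matrix contributes nothing, $[\Delta(G)]_{ij} = 0$, and so $[mI_n - 2L(G)]_{ij} = -2[L(G)]_{ij} = 2[A(G)]_{ij}$. By the definition of the adjacency matrix this equals $2$ when $ij \in E(G)$ and $0$ otherwise, matching the first two lines of Definition \ref{def1}. Next, for the diagonal case $i = j$, I would compute $[mI_n - 2L(G)]_{ii} = m - 2[L(G)]_{ii} = m - 2d(i)$, which is exactly the third line of the definition.

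The only substantive remark is to cross-check that this Definition \ref{def1} really does describe the single-excitation Hamiltonian derived from Eqs.~(\ref{hi}) and (\ref{hi1}). The off-diagonal entries match $[H_{XY}(G)]_{ij} = 2[A(G)]_{ij}$ trivially. For the diagonal entries arising from the $Z_iZ_j$ term, one uses the elementary graph-theoretic identities $\sum_{i,j}[A(G)]_{ij} = 2m$ (the handshake lemma) and $\sum_j [A(G)]_{ij} = d(i)$, which reduce the expression in Eq.~(\ref{hi1}) to $m - 2d(i)$, consistent with Definition \ref{def1}.

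There is no real obstacle here: the lemma is a bookkeeping statement whose content lies entirely in recognizing that the diagonal shift $m I_n$ absorbs the constant part $\frac{1}{2}\sum_{i,j}[A(G)]_{ij} = m$, while the remaining $-2d(i)$ on the diagonal combines with the $+2[A(G)]_{ij}$ off-diagonal entries to produce precisely $-2L(G) = -2(\Delta(G) - A(G))$. The lemma is therefore best presented as a one-line calculation after splitting into the diagonal and off-diagonal cases.
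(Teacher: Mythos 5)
Your proof is correct and follows essentially the same route as the paper: the paper simply rewrites Definition \ref{def1} as $H(G)=2A(G)+mI_{n}-2\Delta(G)$ and substitutes $L(G)=\Delta(G)-A(G)$, which is exactly your entry-by-entry computation packaged as one matrix identity. Your additional cross-check that Definition \ref{def1} agrees with Eqs.~(\ref{hi}) and (\ref{hi1}) is not needed for the lemma itself but is a harmless (and accurate) consistency remark.
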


\begin{proof}
By Definition \ref{def1}, $H(G)=2A(G)+mI_{n}-2\Delta(G)$. By combining this
fact with the definition of Laplacian, we obtain the desired expression.
\end{proof}

\begin{corollary}
Given any graph $G$, the matrices $L(G)$ and $H(G)$ have a common set of
eigenvectors. Moreover, if $\mu$ is an eigenvalue of $L(G)$ then
$\lambda=m-2\mu$ is an eigenvalue of $H(G)$.
\end{corollary}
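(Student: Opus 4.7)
The plan is to derive the corollary as an essentially immediate consequence of Lemma \ref{le1}, since that lemma already expresses $H(G)$ as an affine function of $L(G)$, namely $H(G)=mI_{n}-2L(G)$. Two matrices that differ only by a scalar multiple of the identity and a scaling share the same eigenvectors, so there is nothing genuinely nontrivial to overcome: the only ``work'' is to turn this observation into a one-line calculation.

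Concretely, I would proceed as follows. First, invoke Lemma \ref{le1} to rewrite $H(G)$ in terms of $L(G)$. Next, pick an arbitrary eigenpair $(\mu,v)$ of $L(G)$, i.e.\ $L(G)v=\mu v$. Apply $H(G)$ to $v$ and use linearity together with the fact that $I_{n}v=v$ to obtain
\begin{equation}
H(G)v=(mI_{n}-2L(G))v=mv-2\mu v=(m-2\mu)v,
\end{equation}
which shows simultaneously that $v$ is an eigenvector of $H(G)$ and that the corresponding eigenvalue is $\lambda=m-2\mu$.

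For the ``common set of eigenvectors'' claim, I would then remark that $L(G)$ is real symmetric, hence admits an orthonormal eigenbasis of $\mathbb{R}^{n}$; the argument above shows that this same basis diagonalizes $H(G)$. Conversely, because the map $\mu\mapsto m-2\mu$ is a bijection on $\mathbb{R}$, any eigenvector of $H(G)$ is also an eigenvector of $L(G)=\tfrac{1}{2}(mI_{n}-H(G))$, so the two matrices share \emph{exactly} the same eigenspaces, not merely a common eigenbasis. The main (very mild) obstacle is only to be careful about what ``common set of eigenvectors'' means; the spectral argument above removes any ambiguity.
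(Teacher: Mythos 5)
Your proposal is correct and follows exactly the route the paper intends: the corollary is stated without proof as an immediate consequence of Lemma \ref{le1}, and your one-line computation $H(G)v=(mI_{n}-2L(G))v=(m-2\mu)v$ together with the remark that $L(G)$ is real symmetric is precisely that intended argument, with the added (harmless) observation that the eigenspaces in fact coincide.
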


Let $\{|1\rangle,...,|n\rangle\}$ be the standard basis of an Hilbert space
$\mathcal{H}\cong\mathbb{C}^{n}$. We associate the vertex $i\in V(G)$ to the
vector $|i\rangle\in\{|1\rangle,...,|n\rangle\}$. Let $\iota=\sqrt{-1}$. Given
a graph $G$, two of its vertices $i$ and $j$, and a real number $0<t<\infty$,
the \emph{fidelity} at time $t$ between vertex $i$ and vertex $j$ is defined
by $f_{G}(i,j,t):=|\langle j|e^{-\iota H(G)t}|i\rangle|$. \ We say that two
vertices $i$ and $j$ of a graph $G$ admits \emph{perfect state transfer}
(\emph{w.r.t.} the XYZ adjacency matrix) if there exists $t$ for which
$f_{G}(i,j,t)=1$. Let $U_{t}(G)=e^{-\iota H(G)t}$ be the unitary matrix
associated to $H(G)$ as a function of $t$. Therefore, we have $f_{G}%
(i,j,t)=1$, if $\left\vert [U_{t}(G)]_{i,j}\right\vert =1$.

\section{All-to-All networks}

In this section, we consider a network in which all sites interact directly.
This network is modeled by the \emph{complete graph} $K_{n}$ on $n$ vertices.
In $K_{n}$, we have $d(i)=n-1$, for every $i\in V(G)$. The principal result of
the section is the following statement:

\begin{theorem}
\label{th1}Let $K_{n}$ be the complete graph on $n$ vertices. For every vertex
$i\in V(K_{n})$ and value $n\in\mathbb{N}$, we have the following:

\begin{itemize}
\item $\min_{t}f_{K_{n}}(i,i,t)=1-2/n$ and $t=\frac{\pi}{2n}+\frac{\pi k}{n}$,
for $k\geq0$;

\item $\max_{t}f_{K_{n}}(i,i,t)=1$ and $t=\pi k/n$, for $k\geq1$.
\end{itemize}

For every two distinct vertices $i,j\in V(K_{n})$ and value $n\in\mathbb{N}$,
we have the following:

\begin{itemize}
\item $\min_{t}f_{K_{n}}(i,j,t)=0$ is attained when $f_{K_{n}}(i,i,t)$ is maximum;

\item $\max_{t}f_{K_{n}}(i,j,t)=2/n$ is attained when $f_{K_{n}}(i,i,t)$ is minimum.
\end{itemize}
\end{theorem}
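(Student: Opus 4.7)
The plan is to obtain a closed-form expression for the propagator $U_t(K_n)=e^{-\iota t H(K_n)}$ and read the matrix entries directly. First I would use Lemma \ref{le1} to write $H(K_n)=mI_n-2L(K_n)$ with $m=n(n-1)/2$. Since every vertex of $K_n$ has degree $n-1$ and $A(K_n)=J_n-I_n$, where $J_n$ is the all-ones matrix, the Laplacian collapses to $L(K_n)=nI_n-J_n$. Substituting yields
\[
H(K_n)=(m-2n)I_n+2J_n,
\]
so the $(m-2n)I_n$ summand contributes only a global phase that drops out of the fidelity.

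The key algebraic fact is $J_n^{2}=nJ_n$, equivalently that $\tfrac{1}{n}J_n$ is the rank-one projector onto the all-ones vector and $J_n$ has spectrum $\{n,0,\ldots,0\}$. A direct power-series computation then gives
\[
e^{-2\iota t J_n}=I_n+\frac{e^{-2\iota t n}-1}{n}J_n,
\]
and hence, up to an overall phase,
\[
[U_t(K_n)]_{ii}=1+\frac{e^{-2\iota t n}-1}{n},\quad [U_t(K_n)]_{ij}=\frac{e^{-2\iota t n}-1}{n}\;(i\neq j).
\]

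Setting $\theta=2tn$ and taking moduli, the diagonal fidelity becomes
\[
f_{K_n}(i,i,t)=\frac{1}{n}\sqrt{(n-1)^2+2(n-1)\cos\theta+1},
\]
a monotone function of $\cos\theta$: its maximum $1$ is attained exactly when $\cos\theta=1$, i.e.\ $t=\pi k/n$, and its minimum $(n-2)/n=1-2/n$ exactly when $\cos\theta=-1$, i.e.\ $t=\pi/(2n)+\pi k/n$. The off-diagonal fidelity simplifies to $|e^{-\iota\theta}-1|/n=2|\sin(tn)|/n$, with minimum $0$ when $\sin(tn)=0$ and maximum $2/n$ when $|\sin(tn)|=1$. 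Since both expressions are controlled by the single phase $\theta=2tn$, the dovetailing of the four extrema asserted in the theorem is automatic.

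There is no serious obstacle: once $H(K_n)$ is recognized as a scalar shift plus a rank-one operator, exponentiation is elementary and the extremization reduces to trigonometry. The only care needed is to carry the global phase $e^{-\iota t(m-2n)}$ through without letting it affect the absolute values, and to match the allowed ranges $k\geq 0$ and $k\geq 1$ in the statement with the physical constraint $t>0$.
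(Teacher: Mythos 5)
Your proposal is correct and follows essentially the same route as the paper: both reduce $H(K_n)$ to the scalar shift $\frac{n(n-5)}{2}I_n$ plus $2J_n$, exploit the spectrum $\{n,0,\dots,0\}$ of $J_n$ so that only the relative phase $e^{-2\iota nt}$ survives in the fidelities, and then extremize the resulting trigonometric expressions, recovering the paper's entries $[U_t(K_n)]_{ii}=n^{-1}e^{-\iota\binom{n}{2}t}+\frac{n-1}{n}e^{-\iota\frac{n(n-5)}{2}t}$ and $[U_t(K_n)]_{ij}=n^{-1}\bigl(e^{-\iota\binom{n}{2}t}-e^{-\iota\frac{n(n-5)}{2}t}\bigr)$ up to a global phase. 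The only (cosmetic) difference is that you exponentiate $J_n$ directly via $J_n^2=nJ_n$ instead of writing out the spectral decomposition with the Fourier eigenbasis as the paper does.
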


\begin{proof}
Since $d(i)=n-1$, for every $i\in V(K_{n})$, we have $A(K_{n})=J_{n}-I_{n}$,
where $J_{n}$ is the $n\times n$ all-ones matrix. Then, $L(K_{n}%
)=(n-1)I_{n}-(J_{n}-I_{n})=nI_{n}-J_{n}$. Given $|E(K_{n})|=\binom{n}{2}$, it
follows that
\[
H(K_{n})=\binom{n}{2}I_{n}-2\left(  nI_{n}-J_{n}\right)  =\frac{n\left(
n-5\right)  }{2}I_{n}+2J_{n}.
\]
The eigenvalues of $H(K_{n})$ are then $\lambda_{1}^{[1]}=n(n-1)/2$ and
$\lambda_{2}^{[n-1]}([H(K_{n})]_{ii}-[H(K_{n})]_{ij})=n\left(  n-5\right)
/2$. Let us denote by $\overrightarrow{J}_{n}$ the all-ones vector of
dimension $n$. Since $[H(K_{n}),J_{n}]=0$, these two matrices share a common
set of eigenvectors. The matrix $J_{n}$ is diagonalized by $F_{n}$, the
Fourier transform over the group $\mathbb{Z}_{n}$: $[F_{n}]_{ij}=e^{\iota2\pi
ij/n}\equiv\omega^{ij}$. As a consequence, the single eigenvector
corresponding to the eigenvalue $\lambda_{1}$ is of the form $|\lambda
_{1}\rangle=n^{-1/2}\overrightarrow{J}_{n}$; the $n-1$ eigenvectors
corresponding to $\lambda_{2}$ are of the form $|\lambda_{2}^{i}%
\rangle:=n^{-1/2}\sum_{j=1}^{n}\omega^{ij}|j\rangle$, for every $i=1,...,n-1$.
Now, we can read $U_{t}(K_{n})\equiv e^{-\iota H(K_{n})t}$ in its spectral
decomposition:
\[
U_{t}(K_{n})=n^{-1}\left(  e^{-\iota\lambda_{1}t}J_{n}+e^{-\iota\lambda_{2}%
t}\sum_{i;j,k=1}^{n-1;n}\omega^{ij}\overline{\omega}^{ik}|j\rangle\langle
k|\right)  .
\]
Then
\[
\lbrack U_{t}(K_{n})]_{ii}=n^{-1}e^{-\iota\binom{n}{2}t}+\frac{n-1}%
{n}e^{-\iota\frac{n\left(  n-5\right)  }{2}t}%
\]
for every $i$ and
\[
\lbrack U_{t}(K_{n})]_{ij}=n^{-1}\left(  e^{-\iota\binom{n}{2}t}%
-e^{-\iota\frac{n\left(  n-5\right)  }{2}t}\right)
\]
if $i\neq j$. Thus, we have $1-2/n\leq\left\vert \lbrack U_{t}(K_{n}%
)]_{ii}\right\vert \leq1$, for every $i$. The minimum is attained when
$t=\frac{\pi k}{n}+\frac{\pi}{2n}$. The maximum is attained for each $t=\pi
k/n$. So, $0\leq\left\vert \lbrack U_{t}(K_{n})]_{ij}\right\vert \leq2/n$. For
these entries, the minimum value is attained for each $t$ such that
$\left\vert [U_{t}(K_{n})]_{ii}\right\vert $ is maximum and \emph{viz}.
\end{proof}

\bigskip

The results is that there is no PST in $K_{n}$ (but for the trivial cases
$n\neq1,2$). This is equivalent to say that a qubit can not be accurately
transferred between two sites in a network in which all sites are connected to
each other. The fidelity given by the evolution in $K_{n}$, with $n=4,8$ is
illustrated in the figures below (left and right, \emph{resp.}) for
$t\in\lbrack0,\pi]$. The dashed lines represent $\left\vert [U_{t}%
(K_{n})]_{ij}\right\vert $ with $i\neq j$; the solid ones, $\left\vert
[U_{t}(K_{n})]_{ii}\right\vert $.
\[%
\begin{tabular}
[c]{|ll|}\hline%
%TCIMACRO{\FRAME{itbpF}{1.6418in}{1.0573in}{0in}{}{}{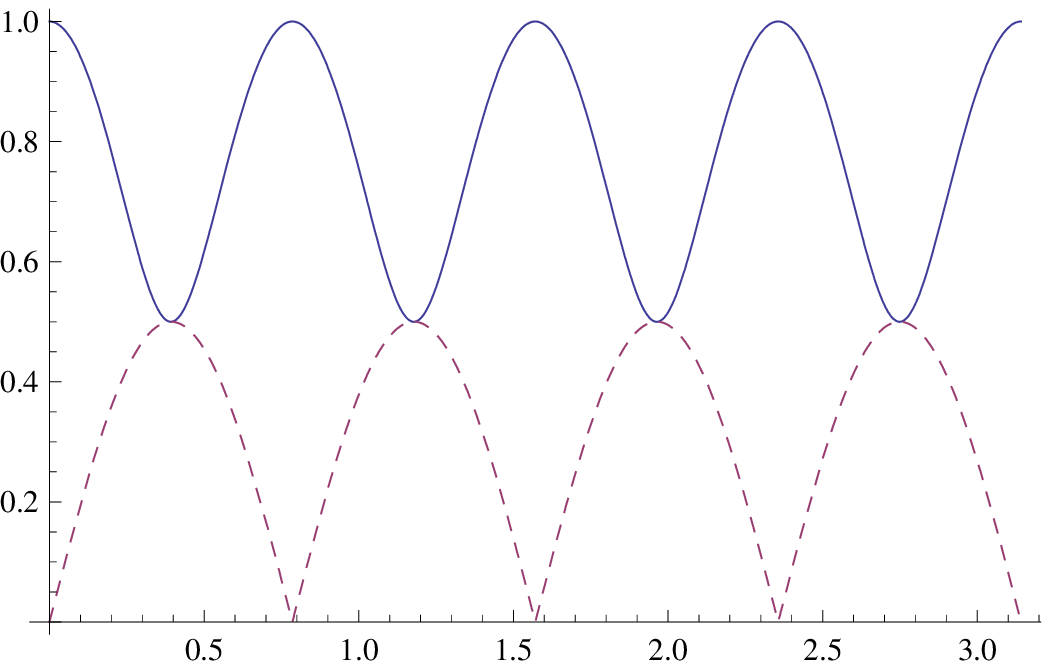}%
%{\special{ language "Scientific Word";  type "GRAPHIC";
%maintain-aspect-ratio TRUE;  display "USEDEF";  valid_file "F";
%width 1.6418in;  height 1.0573in;  depth 0in;  original-width 4.2699in;
%original-height 2.7239in;  cropleft "0";  croptop "1";  cropright "1";
%cropbottom "0";  filename 'fig4all.eps';file-properties "XNPEU";}} }%
%BeginExpansion
{\includegraphics[
height=1.0573in,
width=1.6418in
]%
{fig4all.eps}%
}
%EndExpansion
&
%TCIMACRO{\FRAME{itbpF}{1.6418in}{1.0573in}{0in}{}{}{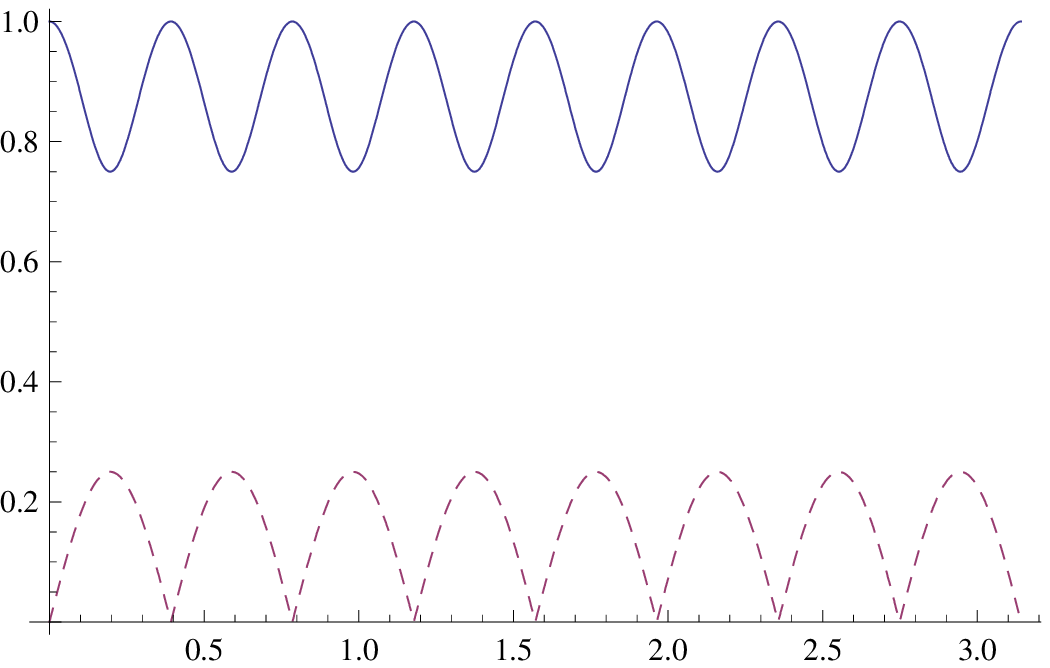}%
%{\special{ language "Scientific Word";  type "GRAPHIC";
%maintain-aspect-ratio TRUE;  display "USEDEF";  valid_file "F";
%width 1.6418in;  height 1.0573in;  depth 0in;  original-width 4.2699in;
%original-height 2.7239in;  cropleft "0";  croptop "1";  cropright "1";
%cropbottom "0";  filename 'fig48all.eps';file-properties "XNPEU";}} }%
%BeginExpansion
{\includegraphics[
height=1.0573in,
width=1.6418in
]%
{fig48all.eps}%
}
%EndExpansion
\\\hline
\end{tabular}
\]

For $t=\frac{\pi k}{n}+\frac{\pi}{2n}$, Theorem \ref{th1} prompts to the
following facts. If $n$ is odd then%

\[
\lbrack U_{t}(K_{n})]_{ij}=\left\{
\begin{tabular}
[c]{ll}%
$\delta_{ij}-2\iota/n,$ & if $n=4k-1;$\\
$\delta_{ij}-2/n,$ & if $n=4k+1.$%
\end{tabular}
\right.
\]
If $n$ is even, we need to distinguish two cases:

\begin{itemize}
\item $n=4k$,%
\[
\lbrack U_{t}(K_{n})]_{ij}=\left\{
\begin{tabular}
[c]{ll}%
$-\left(  \delta_{ij}-2/n\right)  \sqrt{i},$ & if $k$ even;\\
$\left(  \delta_{ij}-2/n\right)  \sqrt{i},$ & if $k$ odd;
\end{tabular}
\right.
\]

\item $n=4k+2$,%
\[
\lbrack U_{t}(K_{n})]_{ij}=\left\{
\begin{tabular}
[c]{ll}%
$\overline{-\left(  \delta_{ij}-2/n\right)  \sqrt{i}},$ & if $k$ even$;$\\
$\overline{\left(  \delta_{ij}-2/n\right)  \sqrt{i}},$ & if $k$ odd.
\end{tabular}
\ \ \right.
\]

\end{itemize}

This is the Grover operator with an overall phase. Recall that this operator
plays a central role in the Grover algorithm for database search. The unitary
$U_{t}(K_{n})$ induces a continuous-time quantum walk on $K_{n}$. Notably an
alternative version of this search technique has been designed as a
discrete-time quantum walk algorithm \cite{ma}.

\section{All-to-All networks with a missing link}

In this section, we consider a network in which all sites but two interact
directly. This network is modeled by $K_{n}^{-}$, the \emph{complete graph
minus an edge} on $n$ vertices. The graph $K_{n}^{-}$ is obtained from $K_{n}$
by deleting an arbitrary edge.

\begin{theorem}
\label{th2}Let $K_{n}^{-}$ be the complete graph minus an edge on $n$
vertices. For every vertex $i\in V(K_{n}^{-})$ and value $n=4k$, with
$k\in\mathbb{N}$, we have the following statements:

If $i=1,n$ then

\begin{itemize}
\item $\min_{t}f_{K_{n}^{-}}(i,i,t)=0$ and $t=\frac{\pi}{4}+\frac{\pi k}{2}$,
for $k\geq0$;

\item $\max_{t}f_{K_{n}^{-}}(i,i,t)=1$ and $t=\pi k/2$, for $k\geq1$.
\end{itemize}

If $i\neq1,n$ then

\begin{itemize}
\item $\min_{t}f_{K_{n}^{-}}(i,i,t)=1-2/n$ and $t=\frac{\pi}{2n}+\frac{\pi
k}{n}$, for $k\geq0$;

\item $\max_{t}f_{K_{n}^{-}}(i,i,t)=1$ and $t=\pi k/n$, for $k\geq1$.
\end{itemize}

If $i=1$ and $j=n$ then

\begin{itemize}
\item $\min_{t}f_{K_{n}^{-}}(1,n,t)=0$ is attained when $f_{K_{n}}(1,1,t)$ is maximum;

\item $\max_{t}f_{K_{n}^{-}}(1,n,t)=1$ is attained when $f_{K_{n}}(1,1,t)$ is minimum.

In all other cases, when $i\neq j$ (and $i\neq1,n$)

\item $\min_{t}f_{K_{n}^{-}}(i,j,t)=0$ is attained when $f_{K_{n}}(i,i,t)$ is maximum;

\item $\max_{t}f_{K_{n}^{-}}(i,j,t)=2/n$ is attained when $f_{K_{n}}(i,i,t)$
is minimum.
\end{itemize}
\end{theorem}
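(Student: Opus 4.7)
The plan is to mirror the proof of Theorem~\ref{th1}: extract the spectrum of $H(K_n^{-})$ from its Laplacian via Lemma~\ref{le1}, expand $U_t=e^{-\iota H t}$ in its spectral decomposition, and read off the matrix entries case by case. The first step is to diagonalize $L(K_n^{-})$ using the $S_2\times S_{n-2}$ symmetry group of the graph (swap of $1,n$ and arbitrary permutations of $\{2,\ldots,n-1\}$); this splits $\mathbb{C}^n$ into three invariant subspaces. One obtains the all-ones direction (eigenvalue $0$); the antisymmetric vector $|1\rangle-|n\rangle$, which is annihilated by $A(K_n^-)$ and hence is an eigenvector with eigenvalue $d(1)=n-2$; and the $(n-2)$-dimensional orthogonal complement, on which $L$ acts as $n\cdot I_n$. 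With $m=\binom{n}{2}-1$, Lemma~\ref{le1} then delivers three $H(K_n^-)$-eigenvalues $\lambda_1=m$, $\lambda_2=m-2(n-2)$, $\lambda_3=m-2n$, of multiplicities $1,1,n-2$, with spectral gaps $\lambda_1-\lambda_2=2(n-2)$ and $\lambda_1-\lambda_3=2n$.

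Next I would write $U_t=\sum_{r=1}^{3}e^{-\iota\lambda_r t}P_r$ with $P_1=J_n/n$, $P_2=\tfrac12(|1\rangle-|n\rangle)(\langle 1|-\langle n|)$, and $P_3=I_n-P_1-P_2$, and compute entries. Because $P_2$ is supported on $\mathrm{span}(|1\rangle,|n\rangle)$, every $[U_t]_{ij}$ with $\{i,j\}\not\subseteq\{1,n\}$ reduces to a two-term expression in $\lambda_1,\lambda_3$ identical to the one analyzed in Theorem~\ref{th1}; since $\lambda_1-\lambda_3=2n$ exactly as before, the diagonal bound $1-2/n$ and the off-diagonal bound $2/n$, together with their attainment times $\pi/(2n)+\pi k/n$ and $\pi k/n$, are inherited verbatim for every $i\neq 1,n$. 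The only nontrivial block is the $\{1,n\}\times\{1,n\}$ one, where
\[
[U_t]_{11}=\tfrac{e^{-\iota\lambda_1 t}}{n}+\tfrac{e^{-\iota\lambda_2 t}}{2}+\tfrac{(n-2)\,e^{-\iota\lambda_3 t}}{2n},\quad [U_t]_{1n}=\tfrac{e^{-\iota\lambda_1 t}}{n}-\tfrac{e^{-\iota\lambda_2 t}}{2}+\tfrac{(n-2)\,e^{-\iota\lambda_3 t}}{2n}.
\]
The positive weights sum to one, $\tfrac{1}{n}+\tfrac{1}{2}+\tfrac{n-2}{2n}=1$, so the triangle inequality yields $|[U_t]_{11}|,|[U_t]_{1n}|\le 1$, with equality precisely when the three phasors align with the matching signs. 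Concretely, $|[U_t]_{11}|=1$ requires $e^{\iota 2nt}=e^{\iota 2(n-2)t}=1$, while $|[U_t]_{1n}|=1$ requires $e^{\iota 2nt}=1$ together with $e^{\iota 2(n-2)t}=-1$.

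The hypothesis $n=4k$ is exactly what synchronizes these two independent phases. For $t=\pi k'/2$, both $2nt=4kk'\pi$ and $2(n-2)t=2(2k-1)k'\pi$ lie in $2\pi\mathbb{Z}$, so $|[U_t]_{11}|=1$; by row-unitarity $\sum_{j}|[U_t]_{1j}|^{2}=1$ this forces $|[U_t]_{1n}|=0$. For $t=\pi/4+\pi k'/2$, one has $2nt\in 2\pi\mathbb{Z}$ but $2(n-2)t=(2k-1)\pi+2(2k-1)k'\pi\in\pi+2\pi\mathbb{Z}$, yielding $|[U_t]_{1n}|=1$ (PST between $1$ and $n$) and therefore $[U_t]_{11}=0$. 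These two saturations realize all the extrema and times claimed in the $\{1,n\}$-block. The main obstacle I foresee is structural rather than computational: one must notice that $P_2$ is rank one and localized on $\{|1\rangle,|n\rangle\}$, so that every entry except the $\{1,n\}$-block inherits from the $K_n$ analysis, and that the weights in the two surviving entries sum to $1$ with a sign flip on the $P_2$-term, which is what lets the triangle inequality be tight simultaneously for $|[U_t]_{11}|$ (at the revival time) and for $|[U_t]_{1n}|$ (at the transfer time). The condition $4\mid n$ is precisely the arithmetic parity needed so that both target phase pairs $(1,1)$ and $(1,-1)$ for $(e^{\iota 2nt},e^{\iota 2(n-2)t})$ are jointly attainable at rational multiples of $\pi$.
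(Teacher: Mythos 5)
Your proposal is correct and follows essentially the same route as the paper: you diagonalize $H(K_n^-)=\left(\binom{n}{2}-1\right)I_n-2L(K_n^-)$ into the same three eigenvalues with multiplicities $1,1,n-2$, obtain identical expressions for the entries of $U_t(K_n^-)$, and extract the extrema and times from the phase differences $2n$ and $2(n-2)$. Your explicit projector/triangle-inequality bookkeeping and the unitarity argument forcing $[U_t]_{11}=0$ at the transfer time merely make rigorous what the paper dismisses as ``a simple calculation,'' so there is nothing substantively different to flag.
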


\begin{proof}
Let us define the $n\times n$ matrix $P_{n}$ such that $[P_{n}]_{1,1}%
=[P_{n}]_{n,n}=1$, $[P_{n}]_{1,n}=[P_{n}]_{n,1}=-1$, and $[P_{n}]_{ij}=0$,
otherwise. The Laplacian of $K_{n}^{-}$ can be written as
\begin{align*}
L(K_{n}^{-})  &  =\left(
\begin{array}
[c]{ccccc}%
n-2 & -1 & \cdots & -1 & 0\\
-1 & n-1 & -1 & \cdots & -1\\
\vdots & -1 & \ddots & -1 & \vdots\\
-1 & \cdots & -1 & n-1 & -1\\
0 & -1 & \cdots & -1 & n-2
\end{array}
\right) \\
&  =L(K_{n})-P_{n}.
\end{align*}
The XYZ adjacency matrix of $K_{n}^{-}$ has then the form%
\begin{align*}
H(K_{n}^{-})  &  =\left(  \binom{n}{2}-1\right)  I_{n}-2L(K_{n}^{-})\\
&  =\left(  \binom{n}{2}-1\right)  I_{n}-2\left(  nI_{n}-J_{n}-P_{n}\right)  .
\end{align*}
This matrix elements are of the form $[H(K_{n}^{-})]_{1,n}=[H(K_{n}%
^{-})]_{n,1}=0$, $[H(K_{n}^{-})]_{i,i}=m-2d(i)$ and $[H(K_{n}^{-})]_{i,j}=2$.
The eigenvalues of $H(K_{n}^{-})$ are then $\lambda_{1}^{[1]}=n(n-1)/2-1$,
$\lambda_{2}^{[1]}=[H(K_{n}^{-})]_{1,1}=n(n-1)/2-2n+3$ and $\lambda
_{3}^{[n-2]}=[H(K_{n}^{-})]_{2,2}-[H(K_{n}^{-})]_{1,2}=n(n-5)/2-1$. We can
chose an orthonormal basis of eigenvectors such that, from the spectral
decomposition of the unitary matrix $U_{t}(K_{n}^{-})\equiv e^{-\iota
H(K_{n}^{-})t}$, we have the following diagonal entries:%
\[
\lbrack U_{t}(K_{n}^{-})]_{ii}=\left\{
\begin{tabular}
[c]{l}%
$n^{-1}\left(  \frac{n}{2}e^{-\iota\left[  \binom{n}{2}-2n+3\right]
t}+e^{-\iota\left[  \binom{n}{2}-1\right]  t}\right.  $\\
$+\left.  (\frac{n}{2}-1)e^{-\iota\left[  \frac{n(n-5)}{2}-1\right]
t}\right)  ,$\\
if $i=1,n$;\\
$n^{-1}\left(  e^{-\iota\left[  \binom{n}{2}-1\right]  t}+(n-1)e^{-\iota
\left[  \frac{n(n-5)}{2}-1\right]  t}\right)  ,$\\
otherwise.
\end{tabular}
\right.
\]

Let us first consider the minimum value of $f_{K_{n}^{-}}(i,i,t)$. In general,
we would need to distinguish two cases depending on the parity of $n$. Here we
take $n$ to be a multiple of $4$. This implies that $n$ is always even. A
simple calculation shows the next facts. It follows that $\min_{t}f_{K_{n}%
^{-}}(1,1,t)=0$ and $\max_{t}f_{K_{n}^{-}}(1,1,t)=1$ if $t=\frac{\pi}{4}%
+\frac{\pi k}{2}$ ($k\geq0$), and $t=\pi k/2$ ($k\geq1$), respectively. The
same holds for $i=n$. If $i\neq1,n$, it follows that $\min_{t}f_{K_{n}^{-}%
}(i,i,t)=1-2/n$ and $\max_{t}f_{K_{n}^{-}}(i,i,t)=1$ if $t=\frac{\pi}%
{2n}+\frac{\pi k}{n}$ and $t=\pi k/n$, respectively.

The off-diagonal entries of $U_{t}(K_{n}^{-})$ are%
\[
\lbrack U_{t}(K_{n}^{-})]_{ij}=\left\{
\begin{tabular}
[c]{l}%
$n^{-1}\left(  e^{-\iota\left[  \binom{n}{2}-1\right]  t}-\frac{n}{2}%
e^{-\iota\left[  \binom{n}{2}-2n+3\right]  t}\right.  $\\
$+\left.  (\frac{n}{2}-1)e^{-\iota\left[  \frac{n(n-5)}{2}-1\right]
t}\right)  ,$\\
if $i=1$ and $j=n$;\\
$n^{-1}\left(  e^{-\iota\lbrack\binom{n}{2}-1]t}-e^{-\iota\lbrack\frac
{n(n-5)}{2}-1]t}\right)  ,$\\
otherwise.
\end{tabular}
\right.
\]

Given this equation, for $i=1$ and $j=n$, we have $\min_{t}f_{K_{n}^{-}%
}(1,n,t)=0$ when $t=\pi k/2$ and $\max_{t}f_{K_{n}^{-}}(1,n,t)=1$ when
$t=\frac{\pi}{4}+\frac{\pi k}{2}$, \emph{i.e.}, $f_{K_{n}^{-}}(1,n,t)$ is
minimum when $f_{K_{n}^{-}}(1,1,t)$ is maximum and \emph{viz.} The last case
to describe is when $i\neq j$ and $i\neq1,j\neq n$: $\min_{t}f_{K_{n}^{-}%
}(i,j,t)=0$ and $\max_{t}f_{K_{n}^{-}}(i,j,t)=2/n$ if $t=\pi k/n$ ($k\geq0$)
and $t=\frac{\pi}{2n}+\frac{\pi k}{n}$ ($k\geq1$), respectively.
\end{proof}

\bigskip

The fidelity given by the evolution in $K_{n}^{-}$, with $n=4,8,16,40$ is
illustrated in the figures below (top L/R, bottom L/R, \emph{resp.}) for
$t\in\lbrack0,\pi]$. The thick lines represent $f_{K_{n}^{-}}(1,1,t)$; the
thin ones, $f_{K_{n}^{-}}(1,n,t)$. The dashed lines represent $f_{K_{n}^{-}%
}(2,2,t)$; the dotted ones, $f_{K_{n}^{-}}(2,3,t)$.
\[%
\begin{tabular}
[c]{|ll|}\hline%
%TCIMACRO{\FRAME{itbpF}{1.6418in}{1.0573in}{0in}{}{}{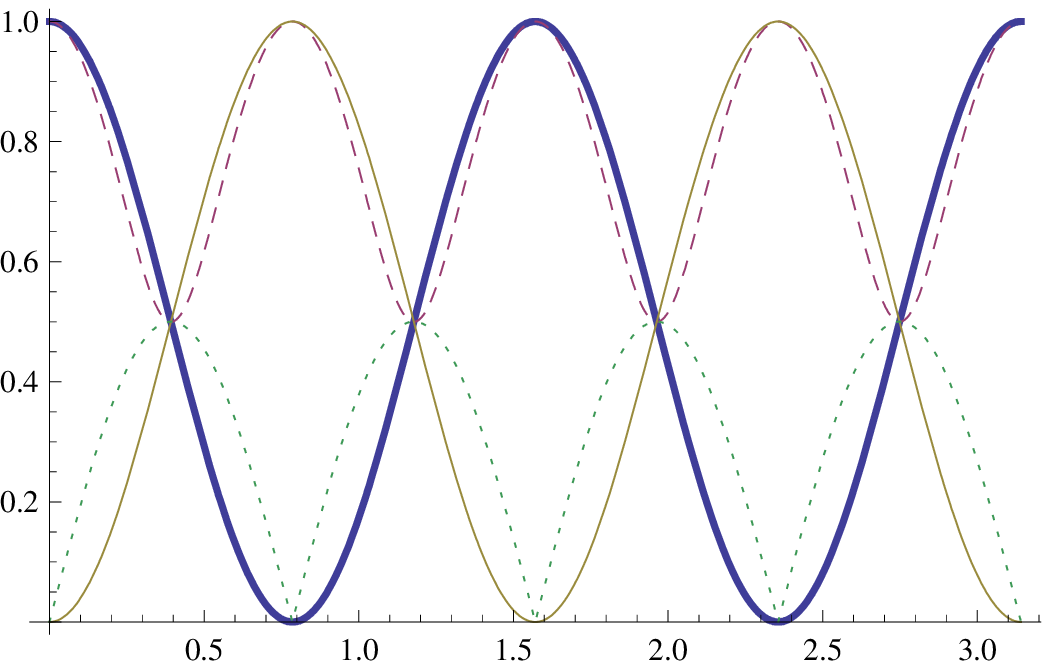}%
%{\special{ language "Scientific Word";  type "GRAPHIC";
%maintain-aspect-ratio TRUE;  display "USEDEF";  valid_file "F";
%width 1.6418in;  height 1.0573in;  depth 0in;  original-width 4.2699in;
%original-height 2.7239in;  cropleft "0";  croptop "1";  cropright "1";
%cropbottom "0";  filename 'fig4edge.eps';file-properties "XNPEU";}} }%
%BeginExpansion
{\includegraphics[
height=1.0573in,
width=1.6418in
]%
{fig4edge.eps}%
}
%EndExpansion
&
%TCIMACRO{\FRAME{itbpF}{1.6418in}{1.0573in}{0in}{}{}{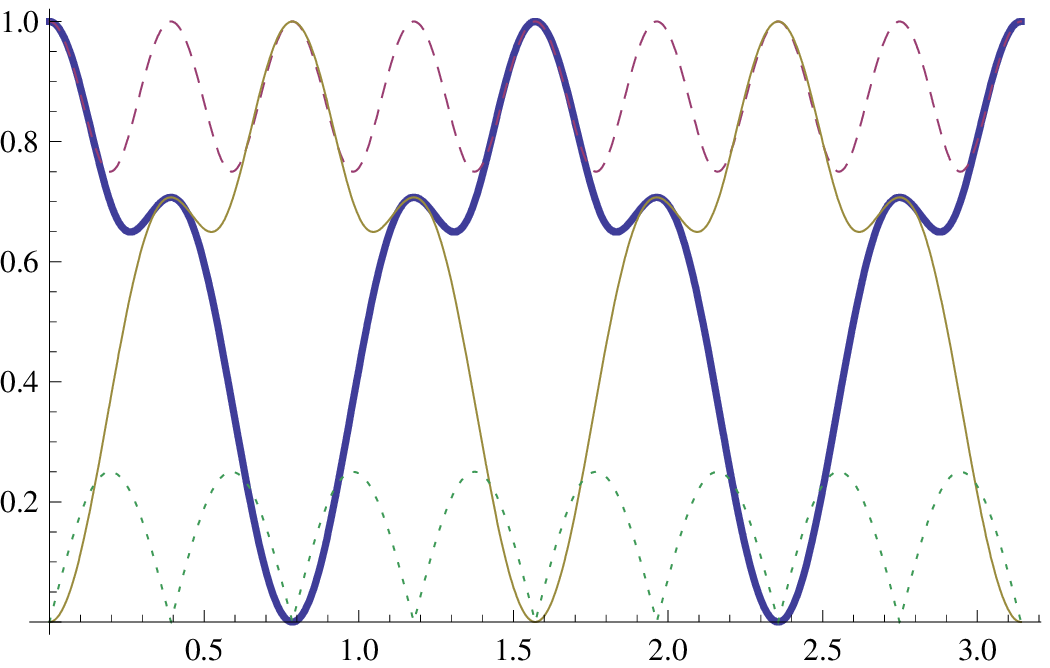}%
%{\special{ language "Scientific Word";  type "GRAPHIC";
%maintain-aspect-ratio TRUE;  display "USEDEF";  valid_file "F";
%width 1.6418in;  height 1.0573in;  depth 0in;  original-width 4.2699in;
%original-height 2.7239in;  cropleft "0";  croptop "1";  cropright "1";
%cropbottom "0";  filename 'fig8edgea.eps';file-properties "XNPEU";}} }%
%BeginExpansion
{\includegraphics[
height=1.0573in,
width=1.6418in
]%
{fig8edgea.eps}%
}
%EndExpansion
\\%
%TCIMACRO{\FRAME{itbpF}{1.6418in}{1.0573in}{0in}{}{}{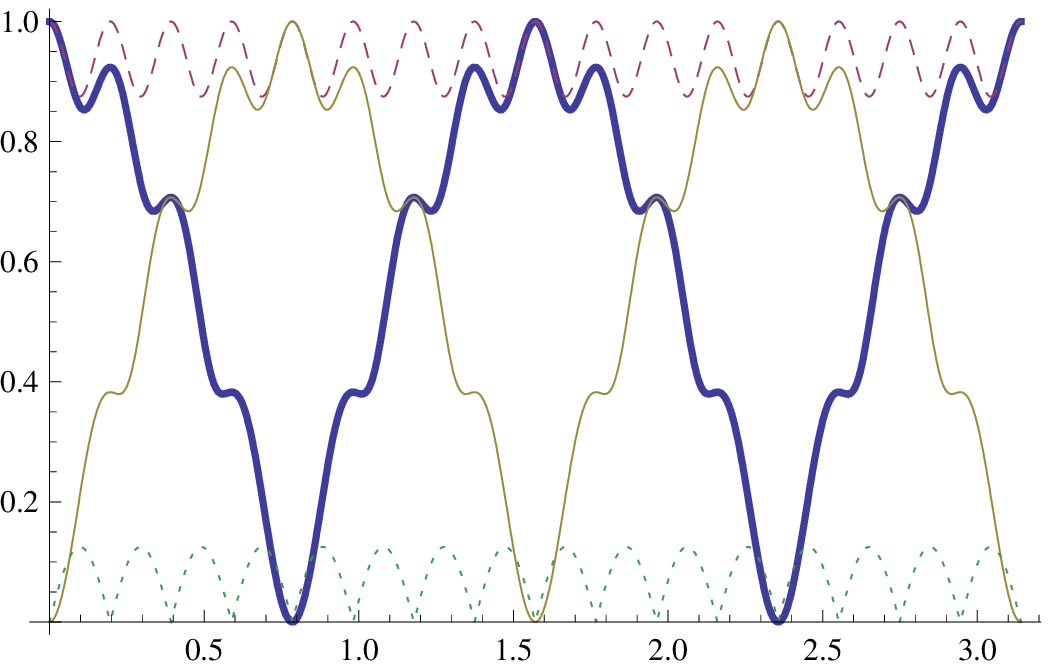}%
%{\special{ language "Scientific Word";  type "GRAPHIC";
%maintain-aspect-ratio TRUE;  display "USEDEF";  valid_file "F";
%width 1.6418in;  height 1.0573in;  depth 0in;  original-width 4.2699in;
%original-height 2.7239in;  cropleft "0";  croptop "1";  cropright "1";
%cropbottom "0";  filename 'fig16edge.eps';file-properties "XNPEU";}} }%
%BeginExpansion
{\includegraphics[
height=1.0573in,
width=1.6418in
]%
{fig16edge.eps}%
}
%EndExpansion
&
%TCIMACRO{\FRAME{itbpF}{1.6418in}{1.0573in}{0in}{}{}{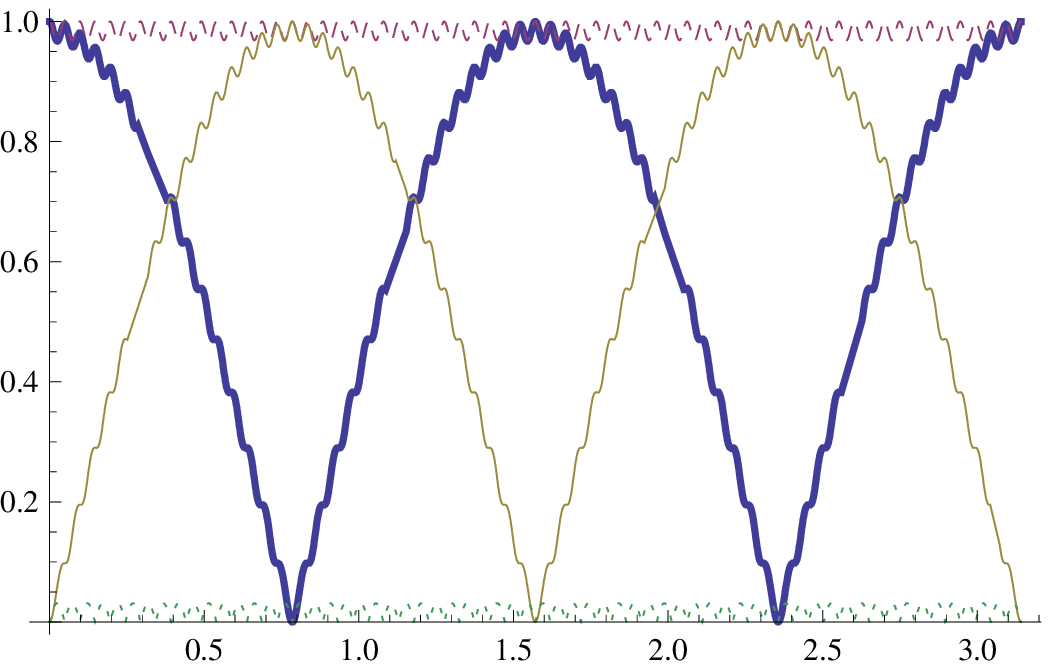}%
%{\special{ language "Scientific Word";  type "GRAPHIC";
%maintain-aspect-ratio TRUE;  display "USEDEF";  valid_file "F";
%width 1.6418in;  height 1.0573in;  depth 0in;  original-width 4.2699in;
%original-height 2.7239in;  cropleft "0";  croptop "1";  cropright "1";
%cropbottom "0";  filename 'fig64edge.eps';file-properties "XNPEU";}} }%
%BeginExpansion
{\includegraphics[
height=1.0573in,
width=1.6418in
]%
{fig64edge.eps}%
}
%EndExpansion
\\\hline
\end{tabular}
\ \
\]

When $n$ is a multiple of $4$, the unitary $U_{t}(K_{n}^{-})$ can be used to
route a qubit over $n$ sites. To put in communication site $i$ and site $j$,
we need to let the system evolve for a time $t=\frac{\pi}{4}+\frac{\pi k}{2}$,
once we have deleted from $K_{n}$ exactly the edge $ij$.

The setting described above can be generalized by deleting more than a single
edge from $K_{n}$. In fact, as far as we delete edges without common vertices
(or, in other words, \emph{vertex disjoint} edges), we still can obtain PST
between the end points of deleted edges, whenever the number of vertices in
the graph is a multiple of $4$. The maximum number of edges that can be
deleted is then $n/2$. In this case, the deleted set corresponds to a
\emph{perfect matching}, that is, a set of vertex disjoint edges including all
the vertices of the graph. As it is for $K_{n}^{-}$, PST occurs when
$t=\frac{\pi}{4}+\frac{\pi k}{2}$. Figure \ref{fig1} describes an example.%

%TCIMACRO{\FRAME{ftbpFU}{3.2694in}{3.1169in}{0pt}{\Qcb{There is PST between
%vertex $1$ and $8$ in the top-left graph; for the pairs $\{1,8\}$ and
%$\{3,6\}$ in the top-right one; for the pairs
%$\{1,8\},\{2,7\},\{3,6\},\{4,5\}$ in the bottom graph. The dashed lines
%represent deleted edges. }}{\Qlb{fig1}}{}%
%{\special{ language "Scientific Word";  type "GRAPHIC";
%maintain-aspect-ratio TRUE;  display "USEDEF";  valid_file "F";
%width 3.2694in;  height 3.1169in;  depth 0pt;  original-width 10.673in;
%original-height 10.1693in;  cropleft "0";  croptop "1";  cropright "1";
%cropbottom "0";  filename 'the3graphs.eps';file-properties "XNPEU";}} }%
%BeginExpansion
\begin{figure}
[ptb]
\begin{center}
\includegraphics[
height=3.1169in,
width=3.2694in
]%
{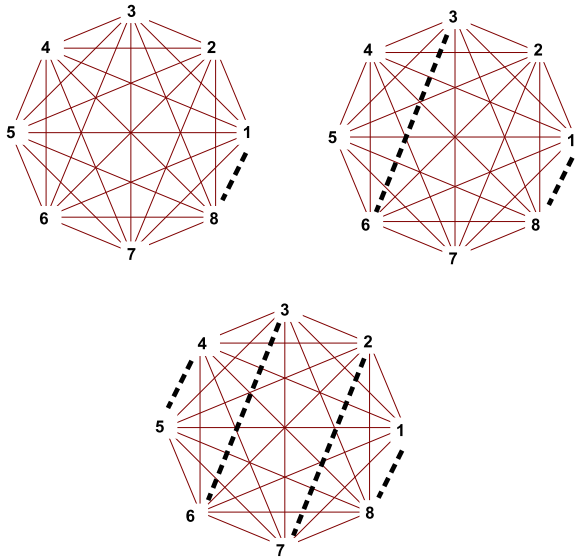}%
\caption{There is PST between vertex $1$ and $8$ in the top-left graph; for
the pairs $\{1,8\}$ and $\{3,6\}$ in the top-right one; for the pairs
$\{1,8\},\{2,7\},\{3,6\},\{4,5\}$ in the bottom graph. The dashed lines
represent deleted edges. }%
\label{fig1}%
\end{center}
\end{figure}
%EndExpansion

\section{Conclusion}

We have given an explicit relation between the Hamiltonian for networks of
interacting qubits in $XYZ$ model and the graph Laplacian. We have studied
evolution in networks modeled by the complete graph $K_{n}$ and the complete
graph minus one edge $K_{n}^{-}$. While $K_{n}$ does not allow PST, we have
seen that $K_{n}^{-}$ does allows PST between the two non-adjacent sites, when
$n$ is a multiple of four. This result can be used to arbitrarily route a
qubit between any two vertices of $K_{n}^{-}$.


\begin{thebibliography}{99}                                                                                               %


\bibitem {be}A. Bernasconi, C. Godsil, S. Severini, preprint.
arXiv:0808.0510v1 [quant-ph]

\bibitem {sou}S. Bose S, \emph{Phys. Rev. Lett. }\textbf{91} 207901 (2003). arXiv:quant-ph/0212041v2

\bibitem {sou1}S. Bose, \emph{Contemporary Physics,} \textbf{Vol. 48} (1), pp.
13-30, 2007. arXiv:0802.1224v1 [cond-mat.other]

\bibitem {bb}D. Burgarth, S. Bose, \emph{Phys. Rev. A} \textbf{71}, 052315
(2005). arXiv:quant-ph/0406112v4

\bibitem {ch}M. Christandl, N. Datta, A. Ekert and A. J. Landahl, \emph{Phys.
Rev. Lett.} \textbf{92}, 187902 (2004). arXiv:quant-ph/0309131v2

\bibitem {ek}K. Eckert, O. Romero-Isart, A. Sanpera, \emph{New J. Phys.}
\textbf{9} (2007) 155. arXiv:quant-ph/0702082v2

\bibitem {facer}C. Facer, J. Twamley, J. D. Cresser, \emph{Phys. Rev. A}
\textbf{77}, 012334 (2008). arXiv:0706.3821v1 [quant-ph]

\bibitem {ma}F. Magniez, A. Nayak, J. Roland, M. Santha, Search via quantum
walk. In:. \emph{Proc. of the 39th ACM Symposium on Theory of Computing}, pp.
575--584 (2007). arXiv:quant-ph/0608026v3

\bibitem {os}T. J. Osborne, N. Linden, \emph{Phys. Rev. A} \textbf{69} 052315
(2005). arXiv:quant-ph/0312141v1

\bibitem {sub}V. Subrahmanyam, \emph{Phys. Rev. A} \textbf{69} 034304 (2004). arXiv:quant-ph/0307135v2
\end{thebibliography}
\end{document}